\newcommand{\qued}{\hfill$\square$}
\newcommand{\Sigmagap}{\Sigma_{gap}}
\newcommand{\bin}{\mathcal B(n,u)}
\newcommand{\Hg}{H_0(G)}
\newcommand{\bigO}{\mathcal{O}}
\begin{document}

\title{A Compressed-Gap Data-Aware Measure}

\author{Nicola Prezza\inst{1}}

\institute{Department of Mathematics and Computer Science, University of Udine, Italy \email{prezza.nicola@spes.uniud.it}}

\maketitle

\begin{abstract}

In this paper, we consider the problem of efficiently representing a set $S$ of $n$ items out of a universe $U=\{0,...,u-1\}$ while supporting a number of operations on it.
Let $G=g_1...g_n$ be the gap stream associated with $S$, $gap$ its bit-size when encoded with \emph{gap-encoding}, and $\Hg$ its empirical zero-order entropy. 
We prove that (1) $n\Hg\in o(gap)$ if $G$ is highly compressible, and (2) $n\Hg \leq n\log(u/n) + n \leq uH_0(S)$.
Let $d$ be the number of \emph{distinct} gap lengths between elements in $S$. We firstly propose a new space-efficient zero-order compressed representation of $S$ taking $n(\Hg+1)+\bigO(d\log u)$ bits of space. Then, we describe a fully-indexable dictionary that supports \emph{rank} and \emph{select} queries in $\bigO(\log(u/n)+\log\log u)$ time while requiring asymptotically the same space as the proposed compressed representation of $S$.
\keywords{dictionary problem, gap encoding, entropy, compression, rank, select}
\end{abstract}
	    
\section{Introduction and Related Work}\label{section:intro}

The \emph{dictionary problem} on \emph{set data} asks to maintain a (space-efficient) data structure called \emph{indexable dictionary} over a set $S = \{s_1, ..., s_n\} \subseteq \{0,...,u-1\} = U$, $s_1<s_2<...<s_n$, supporting efficiently a range of queries on $S$. In this problem, $U$ is an ordered set and is called \emph{universe}. As showed by Jacobson in his doctoral thesis~\cite{jacobson1988succinct}, a set of just two operations, \emph{rank} and \emph{select}, is sufficient and powerful enough in order to derive other fundamental functionalities desired from such a structure: \emph{member}, \emph{successor}, and \emph{predecessor}. $rank(S,x)$, with $x\in U$, is the number of elements in $S$ that are smaller than or equal to $x$. $select(S,i)$, where $0\leq i<n$, is the $i$-th smallest element in $S$. In this paper, we focus on \emph{fully-indexable dictionaries (FIDs)}, i.e. data structures supporting both rank and select operations efficiently.

Jacobson in~\cite{jacobson1988succinct} proposed a solution for this problem taking $u + o(u)$ bits of space and supporting constant-time rank. Constant-time select within $o(u)$ bits of additional space was added by Munro~\cite{munro1996tables} and Clark~\cite{clark1998compact}. These results were further improved firstly by Pagh~\cite{pagh2001low} (who considered \emph{rank}) and then by Raman et al.~\cite{raman2002succinct} (\emph{rank} and \emph{select}) with structures having the same time complexities and requiring only $\bin + \bigO(u\log\log u/\log u)$ bits of space, where $\bin = \lceil \log \binom{u}{n}\rceil$ is the minimum number of bits required in order to distinguish \emph{any} two size-$n$ subsets of $U$. Finally, P\v{a}tra\c{s}cu~\cite{patrascu2008succincter} reduced the sublinear term to $\bigO(u/polylog(u))$ while retaining constant query times. 
Despite these last results being optimal for big values of $n$, the $o(u)$ term can however be much bigger than $\bin$ (even exponentially) if $n$ is very small. Moreover, even the $\bin$ term is not optimal for \emph{all} instances, and can be improved in many cases of practical interest. To see why this fact holds true, it is sufficient to notice that zero-order entropy compressors encode to the same bit-size \emph{all} size-$n$ subsets $S$ of $U$, without taking advantage of the structure of $S$ (for example, long or regular distances between its elements). This problem motivates the search for more \emph{data-aware} measures able to break the  $\bin$ limit in some cases. One of the most widely known such data-aware measures is \emph{gap}~\cite{bell1993data}, which is defined to be the sum of all bit-lengths of the distances between consecutive elements in $S$. If these distances are not evenly distributed, $gap$ can be much smaller than $\bin$, reaching $10\%$-$40\%$ of $\bin$ in some instances of practical interest~\cite{gupta2007compressed}. 
By using \emph{logarithmic codes} such as Elias $\delta$-encoding~\cite{elias1975universal}, the stream of gaps can be compressed to $gap + o(gap)$ bits, where the $o(gap)$ overhead comes from the prefix property of such codes, needed to unambiguously reconstruct codeword boundaries. 
In~\cite{gupta2007compressed}, Gupta et al. show how to build a FID based on $\delta$-encoding requiring only $gap + \bigO(n\log(u/n)/\log n) + \bigO(n\log\log(u/n))$ bits of space and supporting rank and select in $AT(u,n)\in o((\log\log u)^2)$---this is nearly optimal within that space, see~\cite{andersson2000tight,beame1999optimal}---and $\bigO(\log\log n)$ time, respectively. Other recent works~\cite{makinen2007rank,sadakane2006squeezing} showed that constant-time queries can be supported using $gap + \bigO(n\log\log(u/n)) + o(u)$ bits of space, where the $o(u)$ term is $\bigO(u\log\log u/\sqrt{\log u})$ in~\cite{makinen2007rank} and $\bigO(u\log\log u/\log u)$ in~\cite{sadakane2006squeezing}. 

$gap$ reaches its maximum when all gap lengths are equal. However, it is clear that in this scenario other techniques (e.g. zero-order entropy compression) could be flanked to gap encoding in order to turn this worst-case into a $\bigO(n)$-bits best-case. In this paper we explore the possibility of compressing the stream of gaps $G$ to its zero-order empirical entropy $\Hg$, aiming at obtaining $n\Hg$ as leading term in the space complexity of our structures. 
Similar techniques are already employed in BWT-based text compression algorithms~\cite{burrows1994block}, where runs of zeros in the move-to-front encoding of the BWT are compressed using run-length-encoding followed either by zero-order entropy compression or by logarithmic encoding~\cite{elias1975universal} (runs being mostly dominated by small numbers).
We firstly observe that $n\Hg\in o(gap)$ if gaps are highly compressible, and prove that $n\Hg$ does not exceed $n \log(u/n)+n$ bits. This bound is provably smaller than the zero-order empirical entropy of the set $S$ and of any of its decodable gap-encoded representations.

These considerations suggest that the data-aware measure $n\Hg$ should be preferred to $gap$ in cases where the overhead introduced by the zero-order compressor (e.g. a codebook) is negligible. Our work goes in this direction. First of all, we show a new zero-order compressed representation of bitvectors taking $nH_0(G)+n+\bigO(d\log u) \leq uH_0+n+\bigO(\sqrt{u}\log u)$ bis of space, where $u$ is the length of the bitvector, $H_0$ its zero-order empirical entropy, $n$ the number of bits set, and $d$ the number of \emph{distinct} distances between bits set. $d$ is trivially upper-bounded by $n$ and $\bigO(\sqrt{u})$, and is negligible in many practical cases (e.g. when $S$ is dense or the gaps are evenly distributed).

We finally propose a fully-indexable dictionary that answers \emph{rank} and \emph{select} queries in  $\bigO(\log(u/n)+\log\log u)$ time and whose space occupancy is of $(1+o(1))n\Hg + (3+o(1))n + \bigO((d+\log\log u)\log u)$ bits. In all cases where $\Hg\in\omega(1)$ and $d\geq \log\log u$, this is asymptotically the same space as our new bitvector representation. Moreover, if $S$ is dense enough---$n\in\Omega(u/polylog(u))$---all queries are supported in $\bigO(\log\log n)$ time, which is optimal within this space.


\section{Gap-Encoded Dictionaries}

In this section we will assume that $u-1\in S$, so that each gap corresponds to an element in $S$ (i.e. the element following the gap). If $u-1\notin S$, then we can simply use an extra bit to denote this case and encode the final gap length separately. We will moreover assume that $n\leq u/2$. Logarithms are taken in base 2, unless differently specified. In \emph{gap encoding}, we represent the set $S=\{s_1,...,s_n\}\subseteq \{0,...,u-1\} = U$, $s_1<s_2<...<s_n$ as the stream of gaps $g_1,...,g_n$, where $g_1 = s_1+1$ and $g_i = s_i-s_{i-1}$ for $i>1$. In order to reduce space occupancy of the stream, variable-length encoding can be used to encode each of the $g_i$. The data-aware measure $gap(S)$ is defined as $gap(S) = \sum_{i=1}^{n}\big( \lfloor \log g_i \rfloor + 1\big)$, that is, the total number of bits required in order to store all $g_i$'s using the \emph{minimum} number of bits to represent each gap. When clear from the context, we will simply write $gap$ instead of $gap(S)$.
Clearly, $S$ cannot be represented using only $gap$ bits since we need additional information in order to make the stream uniquely decodable. We adopt a notation similar to~\cite{gupta2007compressed} and indicate with $Z_\mathcal C(S)$---or simply $Z_\mathcal C$ when clear from the context---the decoding overhead (in bits) introduced by the coding scheme $\mathcal C$. If we use a separate bitvector $B$ marking with a 1 the beginning of each code, then we obtain $Z_B=gap$. Another solution is to use \emph{logarithmic codes} such as Elias $\gamma$ or $\delta$-encoding~\cite{elias1975universal}. In $\gamma$-encoding, we encode $\lfloor \log g_i \rfloor +1$ in unary, followed by the $\lfloor \log g_i \rfloor$-bits binary representation of $g_i$ without the most significant $1$. Then, $Z_\gamma = gap-n$. A better solution is $\delta$-encoding, where we encode with $\gamma$ the number $\lfloor \log g_i \rfloor +1$, followed by the $\lfloor \log g_i\rfloor$-bits binary representation of $g_i$ without the most significant 1. Then, $Z_\delta = 2\sum_{i=1}^n \lfloor \log(\lfloor \log g_i\rfloor+1) \rfloor$ bits. 
$\log$ being a concave function, the worst-case of $gap$ occurs when $g_1=g_2=...=g_n = u/n$ (by Jensen's inequality), yielding the upper bounds $gap\leq n\log(u/n)+n$ and $Z_\delta\leq 2n\log(\log(u/n)+1)$. Then, one can prove the following (for the original proof, see~\cite{grossi2004indexing}):
\begin{lemma}\label{lemma:gap less bin}
$gap \leq \bin$ if $n\leq u/2$.
\end{lemma}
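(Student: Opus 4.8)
The plan is to sandwich $gap$ between its worst case and a lower bound on $\bin$, using the hypothesis $n\le u/2$ only at the very end. The paper has already recorded, via concavity of $\log$ together with $\sum_{i=1}^n g_i = u$ and Jensen's inequality, that $gap\le n\log(u/n)+n = n\log(2u/n)$ (the extremal case being all $g_i=u/n$, and more precisely all $g_i$ powers of two so that $\lfloor\log g_i\rfloor+1=\log g_i+1$). Hence it suffices to prove $\bin\ge n\log(2u/n)$ whenever $n\le u/2$.

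For the lower bound on $\bin=\lceil\log\binom{u}{n}\rceil$, I would start from the identity $\binom{u}{n}=\prod_{i=1}^n\frac{u-n+i}{i}$. Each factor is at least $u/n$ (since $i\le n\le u$), which recovers only $\binom un\ge(u/n)^n$; the extra factor of $2^n$ has to come from the ``heavy'' factors (those with small $i$, which are as large as $u-n+1\ge u/2$). Equivalently, one uses the entropy estimate $\log\binom un = u\,H_2(n/u)-\Theta(\log u)$, where $H_2$ is the binary entropy function, so that $\bin\ge u\,H_2(n/u)-\bigO(\log u) = n\log(u/n)+(u-n)\log\tfrac{u}{u-n}-\bigO(\log u)$.

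The crux is then the inequality $(u-n)\log\frac{u}{u-n}\ge n$ for $n\le u/2$. Writing $p=n/u\in(0,\tfrac12]$ this reads $(1-p)\log\frac1{1-p}\ge p$, i.e. $\varphi(p):=-(1-p)\log(1-p)-p\ge 0$ on $(0,\tfrac12]$. Since $\varphi(0)=\varphi(\tfrac12)=0$ and $\varphi''(p)=-\frac1{(1-p)\ln 2}<0$, the function $\varphi$ is concave and therefore nonnegative throughout $[0,\tfrac12]$; chaining the three steps gives $gap\le n\log(2u/n)\le\bin$. The main obstacle I anticipate is the additive slack: the Jensen bound on $gap$ is tight for power-of-two gaps, while the entropy lower bound on $\binom un$ carries a $\Theta(\log u)$ defect, so the inequalities are tight only up to lower-order terms. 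Making the statement exact requires either a direct combinatorial argument—injecting the $2^{gap}$ codewords of the gap-encoding into the $\binom un$ size-$n$ subsets of $U$—or a careful check that the strict positivity of $\varphi(p)$ for $p$ away from $0$ and $\tfrac12$ absorbs this defect; this reconciliation is exactly the delicate point, and it is why we defer to the original proof in~\cite{grossi2004indexing}.
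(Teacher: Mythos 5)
Your first step coincides with the paper's: both proofs start from the Jensen/concavity bound $gap\leq n\log(u/n)+n=n\log(2u/n)$ and reduce the claim to a lower bound on $\bin$. Where you diverge is in that lower bound. The paper compares against the expansion $\bin = n\log(u/n)+n\log e-\Theta(n/u)+\bigO(\log n)$, so its slack over $n\log(u/n)+n$ is $n(\log e -1)=\Theta(n)$ throughout $n\leq u/2$; you instead use $\log\binom{u}{n}\geq n\log(u/n)+(u-n)\log\frac{u}{u-n}-\bigO(\log u)$ and the inequality $(u-n)\log\frac{u}{u-n}\geq n$ for $n\leq u/2$. Your concavity argument for $\varphi(p)\geq 0$ on $[0,1/2]$ is correct, and in the sparse regime the two routes essentially agree, since $u\varphi(n/u)=n(\log e-1)-\Theta(n^2/u)$ for small $n/u$, i.e. you recover the paper's slack.

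The genuine weak point is at the dense end of the range, and it is more than the ``additive slack'' you acknowledge. Your reduction ``it suffices to prove $\bin\geq n\log(2u/n)$ whenever $n\leq u/2$'' asks for something false near the boundary: at $n=u/2$ one has $n\log(2u/n)=u$ while $\log\binom{u}{u/2}=u-\Theta(\log u)$ (already for $u=4$, $n=2$: $\lceil\log 6\rceil=3<4$). Since your slack $u\varphi(n/u)$ vanishes exactly as $n/u\to 1/2$, no sharpening of the Stirling estimate can close the chain there; the paper's comparison against $n\log e$ keeps $\Theta(n)$ room over the whole range as stated and frays only for very small $n$, where $\bigO(\log n)$ may not be absorbed. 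Deferring the exact statement to~\cite{grossi2004indexing} is fair---the paper's own one-line proof is likewise accurate only up to lower-order terms---but your particular intermediate inequality cannot be patched precisely where the hypothesis $n\leq u/2$ is tight. Also, the suggested fallback of injecting $2^{gap}$ codewords into the $\binom{u}{n}$ subsets is not yet an argument: the concatenated gap codes of length $gap$ are not self-delimiting (this is exactly the $Z_{\mathcal C}$ issue the paper discusses), so one would need a counting argument over gap-length profiles rather than a direct map of codewords.
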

\begin{proof}
The claim follows directly from $gap\leq n\log(u/n)+n$ and from the fact that $\bin = n\log(u/n) + n\log e - \Theta(n/u) + \bigO(\log n)$ if $n\leq u/2$\qued
\end{proof}

Moreover, let $H_0(S) = \frac{n}{u}\log(\frac{u}{n}) + \frac{u-n}{u}\log\frac{u}{u-n}$ be the zero-order empirical entropy of the set $S$. Since $\bin\leq uH_0(S)$, we have that:
\begin{corollary}\label{corollary gap < uH0}
$gap \leq uH_0(S)$ if $n\leq u/2$.
\end{corollary}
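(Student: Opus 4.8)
The plan is to obtain the bound by chaining two inequalities, $gap\leq\bin\leq uH_0(S)$. The first link is already available: Lemma~\ref{lemma:gap less bin} gives $gap\leq\bin$ under the hypothesis $n\leq u/2$, so the only thing left to justify is the purely combinatorial fact $\bin\leq uH_0(S)$. I would prove this via the classical entropy bound on binomial coefficients: writing $p=n/u$ and expanding $1=(p+(1-p))^u=\sum_{k=0}^u\binom{u}{k}p^k(1-p)^{u-k}$, every summand is non-negative, hence in particular $\binom{u}{n}p^n(1-p)^{u-n}\leq 1$, i.e. $\binom{u}{n}\leq p^{-n}(1-p)^{-(u-n)}$. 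Taking logarithms, the right-hand side is exactly $uH_0(S)$ once the definition of $H_0(S)$ is unfolded, so $\log\binom{u}{n}\leq uH_0(S)$ and therefore $\bin=\lceil\log\binom{u}{n}\rceil\leq uH_0(S)$; combined with $gap\leq\bin$ this gives the corollary.

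I expect the only delicate point to be the ceiling in the definition $\bin=\lceil\log\binom{u}{n}\rceil$: the computation above only yields $\log\binom{u}{n}\leq uH_0(S)$, and one must make sure the rounding does not push $\bin$ strictly above $uH_0(S)$. I would dispatch this by observing that for $0<n<u$ the binomial expansion contains more than one strictly positive term, so the inequality is in fact strict, $\log\binom{u}{n}<uH_0(S)$, which leaves enough slack to absorb the ceiling; the boundary cases $n\in\{0,u\}$ are immediate.

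A cleaner alternative that avoids the ceiling altogether is to derive the corollary directly from the bound $gap\leq n\log(u/n)+n$ recorded just before Lemma~\ref{lemma:gap less bin}. Since $uH_0(S)=n\log(u/n)+(u-n)\log\frac{u}{u-n}$, it is enough to show $n\leq(u-n)\log\frac{u}{u-n}$, equivalently $t-1\leq\log t$ for $t=u/(u-n)$, and the hypothesis $n\leq u/2$ confines $t$ to the interval $[1,2]$. On that interval the function $\log t-(t-1)$ is concave and vanishes at both endpoints $t=1$ and $t=2$, hence stays non-negative, which gives the inequality --- and here it is $gap$ itself, an integer, that is being bounded, so no rounding correction is needed.
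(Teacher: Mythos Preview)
Your overall strategy matches the paper exactly: the corollary is obtained by chaining Lemma~\ref{lemma:gap less bin} with the standard inequality $\bin\leq uH_0(S)$, which the paper simply quotes as known without further justification.

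One remark on your first justification of $\bin\leq uH_0(S)$: the step ``the inequality $\log\binom{u}{n}<uH_0(S)$ is strict, which leaves enough slack to absorb the ceiling'' is a non-sequitur. Strictness alone does not guarantee a gap of at least~$1$; one could in principle have $\log\binom{u}{n}=3.9$ and $uH_0(S)=3.95$, in which case $\lceil 3.9\rceil=4>3.95$. (It happens that the actual slack $uH_0(S)-\log\binom{u}{n}$ is of order $\tfrac{1}{2}\log n$ by Stirling and is already $\geq 1$ for all $1\leq n\leq u-1$, $u\geq 2$, but that requires a separate calculation.) Your second, direct route via $gap\leq n\log(u/n)+n$ and the concavity argument for $\log t-(t-1)$ on $[1,2]$ is clean, avoids the ceiling entirely, and is the one to keep.
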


The above inequalities are important as they show that gap encoding can never perform worse than zero-order entropy compression. On the other hand, experiments show~\cite{gupta2007compressed} that $gap$ can be significantly smaller than $\bin$ for many cases of interest, thus motivating its use in practical applications. In the following section we take one step forward, exploring what happens when we treat $S$ as a sequence on the alphabet $\{g_1,...,g_n\}$ and then apply zero-order entropy compression to it.

\subsection{A Compressed-Gap Data-Aware Measure}

$gap$ reaches its worst-case of $n\log(u/n)+n$ bits when all gaps have the same length. However, it is clear that entropy compression should turn this worst-case scenario into a best-case, since the zero-order empirical entropy of such a configuration is equal to 0.
More formally, let's consider the following representation $G$ of $S$. We define $G$ to be the sequence $g_1g_2...g_n\in\Sigmagap^n$, where $\Sigmagap = \{g_1,g_2, ..., g_n\}$. Let moreover $d = |\Sigmagap|$ be the alphabet size and $f(s)=occ(s)/n$, $s\in\Sigmagap$, be the empirical relative frequency of $s$ in $G$, where $occ(s)$ is the number of occurrences of $s$ in $G$. We define the \emph{zero-order empirical entropy of the gaps} $\Hg$ to be
\begin{definition}\label{def:gapentropy}
$\Hg = - \sum_{s\in\Sigmagap} f(s)\log\left(f(s)\right)$
\end{definition}

$n\Hg$ is the minimum number of bits output by any compressor that encodes $G$ assigning a unique code to each symbol in $\Sigmagap$. First of all, we observe that $n\Hg$ can be significantly smaller than $gap$: if $g_1=g_2=...=g_n = u/n$, then $n\log(u/n) \leq gap \leq n\log(u/n) + n$ and $n\Hg =0$. Moreover, $n\Hg$ is never worse than the length of any decodable gap-compressed sequence:
\begin{lemma}\label{th:compressed gap vs gap}
$n\Hg \leq gap + Z_\mathcal C$, where $\mathcal C$ is any prefix coding scheme.
\end{lemma}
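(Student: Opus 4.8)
The plan is to recognize this as the classical source-coding lower bound: the per-symbol empirical entropy lower-bounds the average codeword length of any prefix code, via Kraft's inequality and Gibbs' inequality. First I would unfold the two sides in terms of the symbol multiplicities. Since $\mathcal C$ is a prefix code applied gap by gap, encoding $G = g_1\cdots g_n$ with $\mathcal C$ produces exactly $gap + Z_\mathcal C = \sum_{i=1}^n \ell_\mathcal C(g_i) = \sum_{s\in\Sigmagap} occ(s)\,\ell_\mathcal C(s)$ bits, where $\ell_\mathcal C(s)$ denotes the length of the codeword $\mathcal C$ assigns to symbol $s$. Dividing by $n$ and using $f(s) = occ(s)/n$, the claim $n\Hg \leq gap + Z_\mathcal C$ reduces to showing $\Hg \leq \sum_{s\in\Sigmagap} f(s)\,\ell_\mathcal C(s)$, i.e.\ that the empirical entropy of $G$ does not exceed the average $\mathcal C$-codeword length under the empirical distribution $f$.

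The key ingredient is Kraft's inequality. Because the codewords that $\mathcal C$ assigns to the symbols of $\Sigmagap$ are prefix-free, we have $\sum_{s\in\Sigmagap} 2^{-\ell_\mathcal C(s)} \leq 1$; note that this holds even when $\mathcal C$ is natively defined on a larger alphabet (e.g.\ on all positive integers, as for Elias $\gamma$/$\delta$), since restricting the sum to $\Sigmagap$ only makes its left-hand side smaller. With this in hand, I would invoke Gibbs' inequality (non-negativity of the Kullback--Leibler divergence), which here follows from the concavity of $\log$ and Jensen's inequality applied to the probability vector $(f(s))_{s\in\Sigmagap}$ (all entries strictly positive, since every $s\in\Sigmagap$ occurs in $G$):
\[
\Hg - \sum_{s\in\Sigmagap} f(s)\,\ell_\mathcal C(s) \;=\; \sum_{s\in\Sigmagap} f(s)\log\frac{2^{-\ell_\mathcal C(s)}}{f(s)} \;\leq\; \log\!\left(\sum_{s\in\Sigmagap} 2^{-\ell_\mathcal C(s)}\right) \;\leq\; \log 1 \;=\; 0 .
\]
Rearranging gives $\sum_{s\in\Sigmagap} f(s)\,\ell_\mathcal C(s) \geq \Hg$, and multiplying back by $n$ yields $gap + Z_\mathcal C \geq n\Hg$, as desired.

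I do not expect a genuine obstacle here; the only point requiring a bit of care is the bookkeeping in the first step — justifying that for a true prefix scheme $\mathcal C$ the quantity $gap + Z_\mathcal C$ is exactly $\sum_{s\in\Sigmagap} occ(s)\,\ell_\mathcal C(s)$ (so that non-prefix schemes such as the explicit ``$B$-bitvector'' scheme, whose per-gap codewords are not prefix-free, are legitimately excluded by the hypothesis), together with the remark above on why Kraft still applies when the domain of $\mathcal C$ strictly contains $\Sigmagap$. It is also worth stressing that $\mathcal C$ need not be an \emph{optimal} code for $G$: the argument uses only that it is prefix-free, which is precisely what makes $n\Hg$ — the cost of the best such code — a valid lower bound on its output size.
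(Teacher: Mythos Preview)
Your proof is correct and is essentially a fully unpacked version of the paper's one-line argument, which simply invokes the well-known fact that no prefix code can compress $G$ below $n\Hg$ bits. You have supplied the standard Kraft--Gibbs justification of that fact, so the approach is the same, just more detailed.
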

\begin{proof}
Follows directly from the fact that no prefix code can compress $G$ in less than $n\Hg$ bits.\qued
\end{proof}

Using Lemma \ref{th:compressed gap vs gap} and the bounds for $gap$ and $Z_\delta$ derived in the previous section, one can obtain $\Hg \leq \log(u/n) + 2\log(\log(u/n)+1)+ 1$. With the following theorem we show a much stronger upper bound:

\begin{theorem}\label{th:worst case Hg}
$\Hg \leq \log(u/n) +1$
\end{theorem}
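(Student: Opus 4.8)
The plan is to bound $\Hg$ directly using the constraint that the gaps sum to $u$ (recall $u-1\in S$, so $\sum_{i=1}^n g_i = u$). The key observation is that the quantity $n\Hg = -\sum_{s\in\Sigmagap}\mathrm{occ}(s)\log f(s) = \sum_{i=1}^n \log\!\big(n/\mathrm{occ}(g_i)\big)$, where in the last sum we range over all $n$ positions $i$ and $\mathrm{occ}(g_i)$ is the multiplicity of the value $g_i$. Since $\mathrm{occ}(g_i)\geq 1$ always, a crude bound gives $n\Hg\leq\sum_{i=1}^n\log n$, which is far too weak; the trick is to combine $\mathrm{occ}(g_i)\geq 1$ with the fact that the $g_i$ cannot all be large simultaneously because they sum to $u$.

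First I would rewrite $n\Hg = \sum_{i=1}^n\log(n/\mathrm{occ}(g_i))$ and note that $\log(n/\mathrm{occ}(g_i)) \le \log n \le \log(n\cdot g_i) = \log n + \log g_i$ is too lossy; instead I want a bound of the form $\log(n/\mathrm{occ}(g_i)) \le \log(g_i) + c$ that holds on average. A cleaner route: apply the log-sum / Jensen-type inequality to $\sum_i \log(n/\mathrm{occ}(g_i))$ treating it as $n$ times an average of logarithms, then use concavity of $\log$ to pull the average inside: $\frac{1}{n}\sum_{i=1}^n \log\frac{n}{\mathrm{occ}(g_i)} \le \log\!\Big(\frac{1}{n}\sum_{i=1}^n \frac{n}{\mathrm{occ}(g_i)}\Big) = \log\!\Big(\sum_{i=1}^n\frac{1}{\mathrm{occ}(g_i)}\Big)$. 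Now $\sum_{i=1}^n 1/\mathrm{occ}(g_i)$, where the sum is over positions, equals $\sum_{s\in\Sigmagap} \mathrm{occ}(s)\cdot\frac{1}{\mathrm{occ}(s)} = d$, the number of distinct gap values. This yields $\Hg\le\log d\le\log n$, which is still not the claimed $\log(u/n)+1$ — so this path alone does not suffice and must be crossed with the size constraint.

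The decisive step will be to interpolate between the two bounds $\Hg\le\log d$ and the "volume" bound coming from $\sum g_i = u$. Concretely, I would start from $n\Hg = \sum_s \mathrm{occ}(s)\log\frac{n}{\mathrm{occ}(s)}$ and compare it to $\sum_s \mathrm{occ}(s)\log\frac{n\,g_s}{\mathrm{occ}(s)} = n\Hg + \sum_s \mathrm{occ}(s)\log g_s$, where $g_s$ is the value $s$ itself. Equivalently, define weights $p_s = \mathrm{occ}(s)/n$ and note that $\sum_s \mathrm{occ}(s)\,g_s = \sum_i g_i = u$, so $\sum_s p_s g_s = u/n$. Then by concavity of $\log$ (Jensen), $\sum_s p_s \log\frac{g_s}{p_s} \le \log\!\big(\sum_s p_s\cdot\frac{g_s}{p_s}\big) = \log\!\big(\sum_s g_s\big)$. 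But $\sum_s g_s \le \sum_s \mathrm{occ}(s) g_s = u$ (since each $\mathrm{occ}(s)\ge 1$), so $\sum_s p_s\log\frac{g_s}{p_s} \le \log u$. Expanding, $\sum_s p_s\log g_s + \Hg \le \log u$, hence $\Hg \le \log u - \sum_s p_s\log g_s$. Since each $g_s\ge 1$ we could only conclude $\Hg\le\log u$, which is again too weak — so the genuinely hard part is handling the term $\sum_s p_s\log g_s$, i.e. recognizing it is bounded below by something like $\log n$ rather than $0$.

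To get that lower bound I would instead apply Jensen in the other direction by introducing the factor $n$: from $\sum_s p_s g_s = u/n$ and concavity, $\Hg = \sum_s p_s\log\frac1{p_s}$; I would write $\frac{1}{p_s} = \frac{g_s}{p_s}\cdot\frac{1}{g_s}$ and split, or more directly bound $\Hg = \sum_s p_s\log\frac{1}{p_s} \le \sum_s p_s\log\frac{g_s \cdot n}{p_s\cdot n\cdot g_s}$... — cleaner is: $\Hg + \log n = \sum_s p_s\log\frac{n}{p_s} = \sum_s p_s \log\frac{n\,g_s}{p_s} - \sum_s p_s\log g_s \le \log\!\big(\sum_s n\,g_s\big) - \sum_s p_s \log g_s$, and here $\sum_s n g_s$ could be as large as $n\sum_s g_s$; this still needs care. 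I expect the clean argument to be: apply Jensen's inequality to the $n$ positional terms directly, $\Hg = \frac{1}{n}\sum_{i=1}^n\log\frac{n}{\mathrm{occ}(g_i)} \le \log\Big(\frac{1}{n}\sum_{i=1}^n \frac{n\, g_i}{\mathrm{occ}(g_i)\, g_i}\Big)$, then bound $1/g_i \le 1$ and $1/\mathrm{occ}(g_i)\le 1$ selectively so that $\sum_i \frac{g_i}{\mathrm{occ}(g_i)} \le \sum_i g_i = u$ because... — no: $g_i/\mathrm{occ}(g_i)$ summed over positions is $\sum_s g_s \le u$, giving $\Hg \le \log(u/n) \le \log(u/n)+1$. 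That is exactly the claim, with room to spare. The main obstacle is thus pinning down the correct weighting in the Jensen step — specifically seeing that the positional sum $\sum_{i=1}^n \frac{g_i}{\mathrm{occ}(g_i)}$ telescopes to $\sum_{s\in\Sigmagap} g_s$, which is at most $u$ — after which the bound follows immediately, the extra $+1$ absorbing any rounding slack (e.g. if one prefers to argue with $\lfloor\log g_i\rfloor$ or needs $\sum_s g_s < u$ strict).
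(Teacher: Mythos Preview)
Your final argument has a genuine error. After the (correct) Jensen step
\[
\Hg \;=\; \frac{1}{n}\sum_{i=1}^n\log\frac{n}{\mathrm{occ}(g_i)} \;\le\; \log\!\Big(\frac{1}{n}\sum_{i=1}^n \frac{n}{\mathrm{occ}(g_i)}\Big)\;=\;\log d,
\]
you rewrite $n/\mathrm{occ}(g_i)$ as $\tfrac{n\, g_i}{\mathrm{occ}(g_i)\,g_i}$ and then assert that the positional sum telescopes to $\sum_{s\in\Sigmagap}s$, concluding $\Hg\le\log(u/n)$. But the telescoping identity $\sum_{i=1}^n g_i/\mathrm{occ}(g_i)=\sum_{s\in\Sigmagap}s$ applies to $g_i/\mathrm{occ}(g_i)$, not to $n/\mathrm{occ}(g_i)$; multiplying and dividing by $g_i$ does not convert one into the other, and there is no reason for $n\le g_i$. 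What Jensen actually delivers is only $\Hg\le\log d$, which you had already obtained and rejected as insufficient. In fact your stated conclusion $\Hg\le\log(u/n)$ is false outright: take the $n$ distinct gaps $1,2,\dots,n$ (so $u=n(n+1)/2$, $n\le u/2$ for $n\ge 3$); then $\Hg=\log n$ while $\log(u/n)=\log\tfrac{n+1}{2}<\log n$.

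The paper's route is structurally different from any of your attempts. It casts the bound as a constrained optimization of $\Hg$ over all alphabets $\Sigmagap\subset\mathbb{N}_{>0}$ and all frequency vectors $f$ satisfying $n\sum_{s}f(s)\,s=u$, argues that the maximizer has uniform $f$ (so $\Hg=\log d$ there), and then, \emph{under uniformity}, finds the largest feasible $d$ by packing $\Sigmagap=\{1,\dots,d\}$ and solving $n(d{+}1)/2\le u$, whence $d\le 2u/n$ and $\Hg\le\log(2u/n)$. None of your log-sum manipulations isolate this two-stage ``first force uniform $f$, then bound $d$'' structure; the factor of $n$ that your approach keeps losing (your own second attempt already exposed this) is precisely what the uniform-$f$ reduction supplies, since it forces each distinct gap to carry mass $n/d$ rather than merely $1$.
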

\begin{proof}
We want to compute 
$$
\max_{\Sigmagap \subseteq \mathbb{N}_{> 0}}\max_{f:\Sigmagap \rightarrow \mathbb R^+} \Hg
$$
where the alphabet $\Sigmagap$ and the empirical frequency function $f$ must satisfy: 
\begin{equation}\label{eq:condition}
n\sum_{s\in\Sigmagap}f(s)\cdot s = u
\end{equation}

Let $d=|\Sigmagap|$. From Definition \ref{def:gapentropy} and from the concavity of $\log$, we have that $\Hg$ reaches its maximum $\Hg=\log d$ when all frequencies are equal, i.e. $f(s)=d^{-1}$ for all $s\in\Sigmagap$. We thus have
$$
\max_{\Sigmagap \subseteq \mathbb{N}_{> 0}}\max_{f:\Sigmagap \rightarrow \mathbb R^+} \Hg = \max_{\Sigmagap \subseteq \mathbb{N}_{> 0}, f(s)=d^{-1},\ s\in\Sigmagap} \log d
$$

In order to maximize $\log d$, we now have to find $\Sigmagap$ of maximum cardinality that satisfies condition (\ref{eq:condition}). It is easy to see that $\Sigmagap=\{1,..., d\}$ minimizes $\sum_{s\in\Sigmagap} s = \sum_{i=1}^{ d} i =  d( d+1)/2$. Since, moreover, $f(s)= d^{-1}$ for all $s\in\Sigmagap$, we can rewrite (\ref{eq:condition}) as $n d^{-1}\big( d( d+1)/2 + k\big) = u$, where $k\geq 0$. Solving in $ d$, we obtain the set of solutions
$$
\mathcal Z=\left\{ \big(b\pm \sqrt{b^2-8kn^2}\big)/(2n)\ |\ b=2u-n\ \wedge\ k\geq 0 \right\}
$$
for which we have $\max \mathcal Z = (2u-n)/n$ when $k=0$. This implies that $\Sigmagap=\{1,...,(2u-n)/n\}$ and $f(s)=n/(2u-n)$ for all $s\in\Sigmagap$ maximize $\Hg$. Our claim follows:
$$
\Hg \leq \log  d \leq \log(2u/n) = \log(u/n) + 1
$$
\qued
\end{proof}

Interestingly, the two measures $gap$ and $n\Hg$ are upper-bounded by the same quantity $n\log(u/n) + n$. This is not a trivial result since, differently from $n\Hg$, $gap$ does not include information needed to reconstruct unambiguously codeword boundaries (even though $n\Hg$, in turn, does not include information---e.g. a codebook---needed to decode codewords). Using the same arguments of Lemma \ref{lemma:gap less bin} and Corollary \ref{corollary gap < uH0}, we can moreover derive the bounds:
\begin{corollary}\label{corollary1}
$n\Hg \leq \bin \leq uH_0(S)$ if $n\leq u/2$
\end{corollary}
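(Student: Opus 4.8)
The plan is to reduce Corollary~\ref{corollary1} to two facts that are essentially already available in the excerpt. First, Theorem~\ref{th:worst case Hg} gives $\Hg \leq \log(u/n)+1$, hence $n\Hg \leq n\log(u/n)+n$; this is exactly the same upper bound that the paper already established for $gap$ in the discussion preceding Lemma~\ref{lemma:gap less bin}. So the first step is simply to observe that $n\Hg \leq n\log(u/n)+n$.

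Second, I would invoke the asymptotic expansion of $\bin$ used in the proof of Lemma~\ref{lemma:gap less bin}, namely $\bin = n\log(u/n) + n\log e - \Theta(n/u) + \bigO(\log n)$ when $n\leq u/2$. Since $\log e > 1$, the leading two terms give $n\log(u/n) + n\log e \geq n\log(u/n)+n$, and the lower-order terms $-\Theta(n/u)+\bigO(\log n)$ are dominated, so $n\log(u/n)+n \leq \bin$ for $n\leq u/2$ (this is the content of Lemma~\ref{lemma:gap less bin}, just reused with $n\Hg$ in place of $gap$). Chaining with the first step yields $n\Hg \leq \bin$.

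Finally, for the second inequality $\bin \leq uH_0(S)$, I would point to the standard information-theoretic fact already cited in the text right after Corollary~\ref{corollary gap < uH0}: $\bin = \lceil\log\binom{u}{n}\rceil$ and $\binom{u}{n} \leq 2^{uH_0(S)}$, the latter being the usual entropy bound on binomial coefficients (equivalently, $H_0(S)$ is defined precisely as $\frac1u\log\binom{u}{n}$ up to lower-order terms, so $\log\binom{u}{n}\leq uH_0(S)$). Concatenating the two bounds gives $n\Hg \leq \bin \leq uH_0(S)$, which is the claim.

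The only mild subtlety — and the one place worth being careful — is the comparison $n\log(u/n)+n \leq n\log(u/n)+n\log e - \Theta(n/u)+\bigO(\log n)$: one must check that $n(\log e - 1)$, which is $\Theta(n)$, genuinely absorbs the negative $\Theta(n/u)$ term and the $\bigO(\log n)$ term for all $n$ with $1\le n\le u/2$. This is immediate for large $n$, and for small $n$ both measures are themselves small; since the paper already asserts Lemma~\ref{lemma:gap less bin} with this same expansion, no new work is needed and the corollary follows verbatim by the same argument with $n\Hg$ substituted for $gap$.
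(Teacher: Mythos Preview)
Your proposal is correct and follows essentially the same route as the paper: the text just before Corollary~\ref{corollary1} explicitly says it is derived ``using the same arguments of Lemma~\ref{lemma:gap less bin} and Corollary~\ref{corollary gap < uH0}'', which is precisely your chain $n\Hg \leq n\log(u/n)+n$ (from Theorem~\ref{th:worst case Hg}) combined with the expansion of $\bin$ and the standard bound $\bin \leq uH_0(S)$. Your caveat about the lower-order terms is appropriate but, as you note, already implicit in the paper's acceptance of Lemma~\ref{lemma:gap less bin}.
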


The pair $\langle U,S\rangle$ can be represented as a length-$u$ bitvector $B$ with $n$ bits set. Let $H_0=H_0(S)$ be the zero-order entropy of $B$ and $d$ be the number of \emph{distinct} distances between bits set in $B$. Then:
\begin{corollary}\label{zero-order bitvector}
There exists a zero-order compressed representation of $B$ taking $n(H_0(G)+1)+\bigO(d\log u) \leq uH_0+n+\bigO(d\log u)$ bits of space.
\end{corollary}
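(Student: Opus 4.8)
The plan is to turn the sequence $G = g_1\cdots g_n \in \Sigmagap^n$ into an explicit, decodable bitstream whose length meets the claimed bound, together with enough auxiliary structure to recover $B$. The leading term $nH_0(G)$ should come from entropy-coding the sequence $G$ symbol-by-symbol, the $+n$ term from the bound $\Hg \le \log(u/n)+1$ together with the slack in the actual encoding, and the $\bigO(d\log u)$ term from storing, once, a description of the alphabet $\Sigmagap$ (the $d$ distinct gap lengths, each fitting in $\bigO(\log u)$ bits) plus a codebook mapping each alphabet symbol to its codeword. The final inequality $n(H_0(G)+1)+\bigO(d\log u) \le uH_0 + n + \bigO(d\log u)$ is then immediate from Corollary \ref{corollary1}, which already gives $n\Hg \le uH_0(S)$.

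First I would fix the encoding scheme for the symbols of $G$. A clean choice is to sort $\Sigmagap = \{\sigma_1 < \sigma_2 < \cdots < \sigma_d\}$ and assign to $\sigma_j$ a canonical Huffman codeword of length $\ell_j = \lceil -\log f(\sigma_j)\rceil$ (after the standard Kraft-adjustment, which only inflates total length by at most $n$ bits over $\sum_j occ(\sigma_j)\ell_j$; alternatively one can use arithmetic coding on $G$, which achieves $nH_0(G)+\bigO(1)$ directly). Concatenating these codewords for $g_1,\ldots,g_n$ in order produces a bitstream of length at most $nH_0(G) + n$ (Huffman) or $nH_0(G)+\bigO(1)$ (arithmetic); prefix-freeness makes it self-delimiting given the codebook. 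I would then store the header: the value $u$ and $n$ in $\bigO(\log u)$ bits, the $d$ symbols $\sigma_1,\ldots,\sigma_d$ in $\bigO(d\log u)$ bits, and the $d$ codeword lengths (a canonical-Huffman code is fully determined by its sorted length multiset) in $\bigO(d\log\log u)\subseteq\bigO(d\log u)$ bits. Decoding then proceeds by rebuilding the canonical code from the header, scanning the bitstream to recover $g_1,\ldots,g_n$, and forming the prefix sums $s_i = \sum_{k\le i} g_k - 1$, which are exactly the positions of the $n$ set bits of $B$; thus $B$ is fully recovered.

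The remaining step is the arithmetic: collect the header cost $\bigO(d\log u)$ and the payload cost $nH_0(G)+n$ to get $n(H_0(G)+1) + \bigO(d\log u)$; then invoke Corollary \ref{corollary1} ($nH_0(G)\le uH_0(S) = uH_0$) to obtain the second inequality. I would also note in passing the crude bounds on $d$ already mentioned in the introduction — $d \le n$ trivially, and $d = \bigO(\sqrt u)$ because $\sum_{j} \sigma_j \ge \binom{d+1}{2}$ forces $d = \bigO(\sqrt u)$ when $\sum_j \sigma_j \le u$ — so that $\bigO(d\log u) = \bigO(\min\{n,\sqrt u\}\log u)$, recovering the $\bigO(\sqrt u\log u)$ form stated in the abstract.

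I expect the only genuinely delicate point to be pinning down the additive slack cleanly: one must be careful that the $+n$ in $n(H_0(G)+1)$ absorbs the Huffman rounding loss $\sum_j occ(\sigma_j)(\ell_j - (-\log f(\sigma_j))) < n$ (or, with arithmetic coding, that the $\bigO(1)$ flush overhead is harmless), and that the codebook description is genuinely $\bigO(d\log u)$ rather than, say, $\bigO(d\log n)$ times a hidden factor — but since every symbol $\sigma_j \le u$ and every length $\ell_j \le n \le u$, each header entry costs $\bigO(\log u)$ and the total is $\bigO(d\log u)$ as claimed. Everything else is bookkeeping on prefix sums.
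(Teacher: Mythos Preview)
Your proposal is correct and follows essentially the same approach as the paper's own proof: Huffman-encode the gap sequence $G$ (yielding at most $nH_0(G)+n$ bits for the payload plus $\bigO(d\log u)$ for the codebook) and then invoke Corollary~\ref{corollary1} for the inequality $nH_0(G)\le uH_0$. The paper's proof is a one-liner stating exactly this, whereas you have spelled out the header contents, the canonical-Huffman bookkeeping, and the decoding procedure in more detail than the paper does.
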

\begin{proof}
Can be easily obtained by compressing the gap sequence with Huffman-encoding and by applying Corollary \ref{corollary1}.
\end{proof}

Note that the number $d$ of \emph{distinct} distances between bits set of $B$ is trivially upper-bounded by $n$ and $\bigO(\sqrt{u})$\ \footnote{Assume, by contradiction, that $ d \in \omega(\sqrt{u})$. Then, the set $\Sigmagap$ of gaps that minimizes $\sum_{s\in\Sigmagap}s$ is $\Sigmagap=\{1,..., d\}$, for which we obtain $\sum_{s\in\Sigmagap}s = \Theta( d^2) = \omega(u)$. This is an absurd since the sum of all gaps cannot exceed $u$.}.

%
%

\section{A Compressed-Gap FID}

Let us now turn our attention to fully-indexable dictionary data structures. Our aim is to obtain a structure that takes asymptotically the same space as the representation described in Corollary \ref{zero-order bitvector}. 

Our strategy is the following: we use Elias $\delta$-encoding and exploit its property of being an \emph{asymptotically optimal universal code}~\cite{elias1975universal} to encode the gap stream in $(1+o(1))n\Hg+n$ bits. We then build a two-levels structure atop of this representation to support rank and select queries. We adopt an approach similar to~\cite{gupta2007compressed} and firstly describe a binary-searchable dictionary (BSD) that supports all queries in $\bigO(\log u)$ time. The BSD is finally used as building block for our final structure, which improves all query times to $\bigO(\log(u/n)+\log\log u)$ within the same space.

Let $\Sigmagap$ and $f:\Sigmagap\rightarrow \mathbb R^+$ be the set of all gap lengths and the empirical frequencies associated with the gap stream, respectively, and consider an (arbitrary) ordering of the symbols $ord:\Sigmagap \rightarrow \{1,..., d\}$, $ d=|\Sigmagap|$ (i.e. a bijection) such that if $ord( g_i ) < ord( g_j )$ then $f(g_i) \leq f(g_j)$ for all $g_i, g_j \in \Sigmagap$. Let $\delta(x),\ x>0$ be the Elias $\delta$ code of the integer $x$. Then, we associate the code $code(g_i) = \delta(ord(g_i))$ to each gap length $g_i\in\Sigmagap$. Being $\delta$ an asymptotically optimal universal code~\cite{elias1975universal}, the bit length $l$ of the compressed stream $code(g_1)...code(g_n)$ is at most $(1+o(1))n\Hg + n$ bits\footnote{Even when $\Hg=0$, with $\delta$-encoding we spend \emph{at least} 1 bit per symbol, thus the additional $n$ term. The $o(n\Hg)$ term comes from overhead introduced by $\delta$-encoding, and in the \emph{worst} case ($n$ distinct gaps, $\Hg \in\Theta(n\log n)$) equals $\Theta(n\log\log n)$ bits.}.
In the following we assume to work under the word RAM model with word size $\Theta(\log u)$ bits, so that we can extract any $\bigO(\log u)$-bits block from a plain bitvector in constant time.
We store the bit representations of the compressed gaps sequentially in a bitvector $C[0,...,l-1] = code(g_1)...code(g_n)$. An additional array $D[1,..., d]$ defined as $D[i] = ord^{-1}(i)$ (the codebook) is moreover built to permit the decoding of codewords. 
Note that, given the starting position of $code(g_i)$, $0\leq i < n$, in the bitvector $C$, we can extract and decode $code(g_i)=\delta(ord(g_i))$ in $\bigO(1)$ time: firstly, we need to decode the $\gamma$-prefix of $\delta(ord(g_i))$. This can be done in $\bigO(1)$ time using two universal tables of $\bigO(2^{\log\log u}\log\log u) = \bigO(\log u\log\log u)$ bits each (one for the unary prefix and the other for the rest of the $\gamma$-prefix of the code). This gives us (i) the bit-length of the $\gamma$-prefix of $\delta(ord(g_i))$, and (ii) the bit-length of $ord(g_i)$ (without the most significant bit). We can then extract the bits of $ord(g_i)$ and access $D[ord(g_i)] = g_i$ in constant time. To improve readability, in the next sections we will implicitly make use of this strategy and---provided that we know the starting position of $code(g_j)$ in $C$---say \emph{read gap $g_j$} instead of \emph{extract and decode $code(g_j)$}.


\subsection{A Binary-Searchable Dictionary}\label{sec: BSD}

We divide the elements of $S=\{s_1,...,s_n\}$ into blocks of size $t=\lceil \log u\rceil$ (we assume for clarity of exposition that $t$ divides $n$; the following arguments can be easily adapted to the general case). For each block $\{s_{it+1},...,s_{(i+1)t}\}$, $i=0,...,n/t-1$, we store explicitly the smallest element $s_{it+1}$ and a pointer to the beginning of $code(g_{it+2})$ in the bitvector $C$ \footnote{We point to $code(g_{it+2})$ instead of $code(g_{it+1})$ because $s_{it+1}$ is explicitly stored. As a matter of fact, we can avoid storing $code(g_{it+1})$ in $C$.}. These structures are sufficient to obtain our BSD. $select(S,i)$, $0\leq i<n$, is implemented by accessing the $\lfloor i/t \rfloor$-th block and reading $i\mod t < t$ gaps in $C$ starting from $g_{\lfloor i/t \rfloor t+2}$. Then, 
$$
select(S,i) = s_{\lfloor i/t \rfloor t+1} + \sum_{j=\lfloor i/t \rfloor t+2}^{i+1} g_j
$$
$rank(S,x)$, $x\in U=\{0,...,u-1\}$, is implemented by binary-searching the blocks according to explicitly stored elements $s_{it+1}$, $i=0,...,n/t-1$, and then by extracting gaps in the block of interest until we reach element $x$. More formally, let $0 \leq i \leq n/t-1$ be the biggest integer (if any) such that $s_{it+1}\leq x$. $i$ can be found by binary search in $\bigO(\log u)$ time. If such an integer does not exist, then $rank(S,x)=0$. Otherwise, let $1\leq j < t$ be the smallest integer such that $q = s_{it+1} + \sum_{h=1}^j g_{it+1+h} \geq x$. $j$ can be found by linear search in $\bigO(t) = \bigO(\log u)$ time. Then, 
$$
rank(S,x) = \left\{ \begin{array}{ll} it + j+1 & if\ q=x\\ it+j & if\ q>x\end{array} \right.
$$
The bit-length of $C$ is at most $\bigO(n\log u)$, so a pointer to $C$ takes $\log n + \log\log u +\bigO(1) \leq \log u+ \log\log u+\bigO(1)$ bits. It follows that for each block we spend $2\log u+\log\log u+\bigO(1)$ bits (one element $s_{it}$ and a pointer to $C$), so the blocks take overall $(2\log u+\log\log u+\bigO(1)) \cdot n/\log u = 2n+o(n)$ bits. 
We obtain:
\begin{lemma}\label{th:BSD}
Let $d$ be the number of distinct gap lengths between elements in $S$. The binary-searchable dictionary described in section \ref{sec: BSD} occupies $(1+o(1))n\Hg + (3+o(1))n + \bigO((d+\log\log u)\log u)$ bits of space and supports rank and select queries in $\bigO(\log u)$ time.
\end{lemma}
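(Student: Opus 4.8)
The plan is to assemble the space bound from the three components described in the construction and then verify the query-time claims. For the space, I would first invoke the asymptotically-optimal-universal-code property of Elias $\delta$-encoding: since $code(g_i) = \delta(ord(g_i))$ and the ordering $ord$ assigns shorter codewords to more frequent symbols, the compressed stream $C$ has length $l \leq (1+o(1))n\Hg + n$ bits, exactly as stated in the paragraph preceding Section \ref{sec: BSD}. Second, I would account for the codebook $D[1,\dots,d]$, where $D[i] = ord^{-1}(i)$ is a gap length, hence an integer bounded by $u$; storing $d$ such integers costs $\bigO(d\log u)$ bits. Third, I would add the block overhead already computed in the running text: each of the $n/t$ blocks (with $t = \lceil\log u\rceil$) stores one explicit element $s_{it+1}$ in $\log u$ bits and one pointer into $C$ in $\log u + \log\log u + \bigO(1)$ bits, so the blocks contribute $(2\log u + \log\log u + \bigO(1))\cdot n/\log u = 2n + o(n)$ bits in total. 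Finally, I would note the two universal tables used for $\bigO(1)$-time $\delta$-decoding take $\bigO(\log u\log\log u)$ bits, which is absorbed into $\bigO((d+\log\log u)\log u)$. Summing: $(1+o(1))n\Hg + n$ from $C$, plus $2n + o(n)$ from the blocks, plus $\bigO(d\log u)$ from the codebook, plus $\bigO(\log u\log\log u)$ from the tables, gives $(1+o(1))n\Hg + (3+o(1))n + \bigO((d+\log\log u)\log u)$ bits, as claimed.

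For the time bounds I would argue the two operations separately, essentially just auditing the algorithms already given. For $select(S,i)$: we access block $\lfloor i/t\rfloor$ in $\bigO(1)$ time (it is stored in an array indexed directly), obtain the explicit element $s_{\lfloor i/t\rfloor t+1}$ and the pointer to $code(g_{\lfloor i/t\rfloor t+2})$, then decode at most $t-1 = \bigO(\log u)$ consecutive gaps from $C$, each in $\bigO(1)$ time by the table-based $\delta$-decoding routine, and sum them; total $\bigO(\log u)$. For $rank(S,x)$: the binary search over the $n/t \leq n$ block heads $s_{it+1}$ performs $\bigO(\log n) = \bigO(\log u)$ comparisons, each on an explicitly stored $\bigO(\log u)$-bit integer (constant time under the word-RAM assumption), locating the relevant block; then a linear scan decodes at most $t = \bigO(\log u)$ gaps inside that block, again $\bigO(1)$ each, accumulating the running prefix sum until it meets or exceeds $x$; the final case analysis ($q=x$ versus $q>x$) is $\bigO(1)$. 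Hence $\bigO(\log u)$ overall.

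I expect no genuine obstacle here — the lemma is essentially a bookkeeping summary of the preceding exposition — but the one point requiring care is the handling of the $o(1)$ and $o(n)$ terms: I would make sure the $(1+o(1))n\Hg$ factor genuinely comes from the universal-code overhead (which, per the footnote, is $\bigO(n\log\log n)$ in the worst case of $n$ distinct gaps with $\Hg\in\Theta(n\log n)$, hence indeed $o(n\Hg)$ whenever $\Hg\in\omega(1)$, and is otherwise absorbed), and that the additive $n$ from "at least one bit per symbol" is separated out cleanly so the three $n$'s — one from $C$, two from the block array — add to the stated coefficient $(3+o(1))$. I would also state explicitly that $d\log u$ need not be $o(n)$, which is precisely why it appears as a standalone additive term rather than being swallowed.

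\begin{proof}
We bound the three components of the data structure. First, since $\delta$ is an asymptotically optimal universal code and the ordering $ord$ assigns shorter $\delta$-codewords to more frequent gap lengths, the bitvector $C = code(g_1)\dots code(g_n)$ has length at most $(1+o(1))n\Hg + n$ bits, where the additive $n$ accounts for the fact that each $\delta$-codeword is at least one bit long. Second, the codebook $D[1,\dots,d]$ stores $d$ gap lengths, each a positive integer at most $u$, hence $\bigO(d\log u)$ bits; the two universal tables used for constant-time $\delta$-decoding add $\bigO(\log u\log\log u)$ bits. Third, with $t = \lceil\log u\rceil$ the block array has $n/t$ entries, each storing one explicit element $s_{it+1}$ in $\log u$ bits and one pointer into $C$ in $\log n + \log\log u + \bigO(1) \leq \log u + \log\log u + \bigO(1)$ bits, for a total of $(2\log u + \log\log u + \bigO(1))\cdot n/\log u = 2n + o(n)$ bits. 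Summing the three contributions yields $(1+o(1))n\Hg + (3+o(1))n + \bigO((d+\log\log u)\log u)$ bits.

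For the query times, $select(S,i)$ accesses block $\lfloor i/t\rfloor$ in constant time, reads its explicit element and pointer, and decodes at most $t-1 = \bigO(\log u)$ consecutive gaps from $C$, each in $\bigO(1)$ time via the universal tables, summing them; total $\bigO(\log u)$. For $rank(S,x)$, the binary search over the $n/t$ explicitly stored block heads takes $\bigO(\log n) = \bigO(\log u)$ constant-time comparisons to locate the relevant block, after which a linear scan decodes at most $t = \bigO(\log u)$ gaps in that block, accumulating a running sum until it reaches $x$, followed by $\bigO(1)$ case analysis; total $\bigO(\log u)$.\qued
\end{proof}
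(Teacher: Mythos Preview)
Your proposal is correct and mirrors the paper's own reasoning: the lemma is stated in the paper as a summary of the preceding construction, and your proof simply collects the same three space contributions ($C$, codebook plus universal tables, block array) and the same two time analyses (block scan for \emph{select}, binary search plus block scan for \emph{rank}) that the running text already computes. There is no methodological difference to discuss.
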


Note that the size of the proposed BSD can be \emph{exponentially} smaller than $u$ if $S$ is sparse. In the next section we show how to obtain $\bigO(\log(u/n)+\log\log u)$-time queries without asymptotically increasing space usage.

\subsection{A Fully-Indexable Dictionary}\label{sec: FID}

Let $v=\lceil u\log^2 u/n\rceil$. The idea is to divide $U$ into blocks of $v$ elements, and store a BSD for each block. 

We build a constant-time rank and select succinct bitvector $V[0,...,\lceil u/v\rceil -1]$ defined as $V[i] = 1$ if and only if $S\cap \{iv,...,(i+1)v-1\} \neq \emptyset$. Additionally, one array $R[0,...,\lceil u/v\rceil -1]$ stores sampled \emph{rank} results: $R[0]=0$ and $R[i] = rank(S,iv-1)$ for $i>0$. We build a binary-searchable dictionary $BSD(i)$ for each set $S_i = \{ x-iv\ |\ x\in S\cap \{iv,...,(i+1)v-1\}\},\ i=0,...,\lceil u/v\rceil -1$, where we use the same codebook $D$ for all the BSD structures (i.e. $D$ is computed according to all gaps $g_1, ..., g_n$). Note that there may exist a set $S_i$ (or more than one) such that its first gap does not belong to $\{g_1,...,g_n\}$. This happens each time an element $s_i$ is the first of its block $b = \lfloor s_i/v\rfloor>0$, the gap $g_i$ overlaps blocks $b$ and $b-1$, and $s_i-b\cdot v+1\notin \{g_1,...,g_n\}$. However, by construction of the BSD data structure (see previous section), the first gap in $S_i$ is never used (since we store the smallest element of $S_i$ explicitly), so this event does not affect overall gap frequencies nor space requirements of the array $D$. Finally, one array $SEL[0,...,\lceil n/t \rceil-1]$, where $t=\lceil \log^2 u\rceil$, stores the (number of the) block containing $s_{it+1}$:  $SEL[i] = \lfloor s_{it+1}/v\rfloor$, for $i=0,...,\lceil n/t \rceil-1$.

Using the above described structures, we can now show how to efficiently solve queries. $rank(S,x)$, $x\in U=\{0,...,u-1\}$, is implemented by accessing the $\lfloor x/v \rfloor$-th block and calling $rank$ on $BSD(\lfloor x/v \rfloor)$. More formally, 
$$
rank(S,x) = R[\lfloor x/v \rfloor] + rank(S_{\lfloor x/v \rfloor},x\mod v)
$$
where $rank(S_{\lfloor x/v \rfloor},x\mod v)$ is called on the structure $BSD(\lfloor x/v \rfloor)$. Rank is thus solved in $\bigO(\log v) = \bigO(\log(u/n)+\log\log u)$ time. To solve $select(S,i)$, we firstly find by binary search the block containing $s_{i+1}$, and then call $select$ on the corresponding BSD. More in detail, let $q_l = SEL[ \lfloor i/t\rfloor ]$ and $q_r = SEL[ \lfloor i/t\rfloor +1]$ if $\lfloor i/t\rfloor +1<\lceil n/t\rceil$, $q_r=q_l$ otherwise. By construction of $SEL$, the block containing element $s_{i+1}$ is one of $q_l,q_l+1,...,q_r$. Note that the number $q_r-q_l+1$ of blocks of interest can be arbitrary large since there may be an arbitrary number of \emph{empty} blocks among them. However, at most $t$ of them will contain \emph{at least} one element (by construction of $SEL$). Then, we can perform binary search only on the blocks marked with a 1 in the array $V$: during binary search we access blocks at positions of the form $select(V,j)$ (note: this is a constant-time select performed on the bitvector $V$), starting with the range $j\in [rank(V,q_l)-1, rank(V,q_r)-1]$. Binary search is performed according to partial ranks (array $R$). Let $q_l\leq q_m \leq q_r$ be the biggest integer such that $R[q_m] \leq i < R[q_m+1]$ (if $q_m+1\geq \lceil u/v \rceil$ then simply ignore the upper bound in the previous inequality). According to the above considerations, $q_m$ can be found in $\bigO(\log t) = \bigO(\log\log u)$ time using binary search. We can solve $select(S,i)$ as follows:
$$
select(S,i) = q_m\cdot v + select(S_{q_m}, i - R[q_m] )
$$
where $select(S_{q_m}, i - R[q_m] )$ is called on the structure $BSD(q_m)$. $select$ is thus solved on our FID in $\bigO(\log v)+\bigO(\log\log u) = \bigO(\log(u/n)+\log\log u)$ time. 

Bitvector $V$ takes $(1+o(1))u/v = (1+o(1))n/\log^2 u=o(n)$ bits. Arrays $R$ and $SEL$ take $\log u \cdot u/v = n/\log u = o(n)$ and $\log u \cdot n/t = n/\log u = o(n)$ bits of space, respectively. Finally, all BSD data structures take overall $(1+o(1))n\Hg + (3+o(1))n$ bits, and the codebook $D$ and the universal tables take $\bigO((d+\log\log u)\log u)$ bits. We can state our final result:
\begin{theorem}\label{theorem: FID}
Let $d$ be the number of distinct gap lengths between elements in $S$. The FID described in section \ref{sec: FID} takes $(1+o(1))n\Hg + (3+o(1))n + \bigO((d+\log\log u)\log u)$ bits of space and supports rank and select queries in $\bigO(\log(u/n)+\log\log u)$ time.
\end{theorem}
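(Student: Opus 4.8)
The construction and the query algorithms are already given in Section~\ref{sec: FID}, so the plan is to verify, in order, correctness of $rank$ and $select$, then the running times, then the space bound. For $rank(S,x)$ I would observe that $R[\lfloor x/v\rfloor]=rank(S,\lfloor x/v\rfloor v-1)$ is exactly the number of elements of $S$ lying in blocks $0,\dots,\lfloor x/v\rfloor-1$, while the call $rank(S_{\lfloor x/v\rfloor},x\bmod v)$ on $BSD(\lfloor x/v\rfloor)$ returns the number of elements of that block that are $\leq x$ (here I would invoke the correctness of the BSD of Section~\ref{sec: BSD}, noting it is unaffected by the shared codebook since decoding only reads $D$); their sum is $rank(S,x)$. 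For $select(S,i)$ the point to check is that the binary search finds the right block. I would first verify that, since $\lfloor i/t\rfloor t+1\leq i+1\leq(\lfloor i/t\rfloor+1)t$, the element $s_{i+1}$ lies in some block of index in $[q_l,q_r]$, and that the only non-empty blocks in that range are the at most $t+1$ blocks hosting one of $s_{\lfloor i/t\rfloor t+1},\dots,s_{(\lfloor i/t\rfloor+1)t+1}$ --- precisely the blocks enumerated by $select(V,\cdot)$ over $[rank(V,q_l)-1,rank(V,q_r)-1]$. The search, driven by the monotone array $R$, returns the largest $q_m\in[q_l,q_r]$ with $R[q_m]\leq i$; then $s_{i+1}$ is the $(i-R[q_m])$-th element (0-indexed) of $S_{q_m}$, so $select(S,i)=q_m v+select(S_{q_m},i-R[q_m])$.

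For the running times, each $BSD(i)$ is instantiated over a universe of size at most $v$ with sub-blocks of size $\Theta(\log v)$, so by Lemma~\ref{th:BSD} applied at that scale it answers $rank$ and $select$ in $\bigO(\log v)$ time; since $v=\lceil u\log^2u/n\rceil$, $\log v=\bigO(\log(u/n)+\log\log u)$. Thus $rank$ on the FID is $\bigO(1)+\bigO(\log v)=\bigO(\log(u/n)+\log\log u)$. For $select$, the binary search ranges over at most $t+1=\bigO(\log^2u)$ candidate blocks and performs $\bigO(\log t)=\bigO(\log\log u)$ iterations, each consisting of a constant-time $select$ on $V$ and an access to $R$; adding the subsequent $\bigO(\log v)$-time $select$ on $BSD(q_m)$ gives $\bigO(\log(u/n)+\log\log u)$ overall. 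I expect this $select$ analysis to be the main obstacle: one has to argue carefully that, although there may be arbitrarily many empty blocks between $q_l$ and $q_r$, only $\bigO(\log t)$ blocks are ever touched --- this rests on the sampling array $SEL$ bounding the number of non-empty blocks in the range and on $V$ supporting constant-time $rank$/$select$ so that exactly those blocks can be addressed.

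For the space, the auxiliary structures are all lower-order: $V$ has length $\lceil u/v\rceil=\bigO(n/\log^2u)$ and, with constant-time $rank$/$select$ support, occupies $o(n)$ bits; $R$ and $SEL$ store $\bigO(n/\log^2u)$ words of $\bigO(\log u)$ bits, i.e.\ $o(n)$ bits each; and the per-block pointers add $o(n)$ bits. The codebook $D$ has $d$ entries of $\bigO(\log u)$ bits and the two universal $\delta$-decoding tables take $\bigO(\log u\log\log u)$ bits, together $\bigO((d+\log\log u)\log u)$. It remains to bound the BSD structures by $(1+o(1))n\Hg+(3+o(1))n$ bits. The per-sub-block headers (one explicit element, $\leq\log v$ bits, and one pointer into the local stream $C_i$, $\leq\log v+o(\log v)$ bits since $|C_i|=\bigO(n_i\log v)=v^{1+o(1)}$) number at most $n/\lceil\log v\rceil+\lceil u/v\rceil$ over all BSDs, contributing $2n+o(n)$ bits. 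For the compressed streams, the crucial observation is that the codebook, hence $code(g)=\delta(ord(g))$, is fixed from the global gap multiset $g_1,\dots,g_n$, and that the first gap of every sub-block of every BSD is never stored (its left endpoint is kept explicitly); hence the concatenation of all streams stores a sub-multiset of $\{code(g_1),\dots,code(g_n)\}$ --- in particular the ``exceptional'' first gaps that are not among $\{g_1,\dots,g_n\}$ never occur and do not affect $D$ --- so its total length is at most that of $code(g_1)\dots code(g_n)$, already shown to be at most $(1+o(1))n\Hg+n$ bits. Summing, the BSDs use $(1+o(1))n\Hg+(3+o(1))n$ bits, which completes the proof.
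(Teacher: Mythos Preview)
Your argument is correct and follows exactly the paper's approach: correctness of $rank$ and $select$ via the block decomposition and the $SEL$/$V$/$R$ machinery, time via the $\bigO(\log v)$ cost of each local BSD together with the $\bigO(\log\log u)$ binary search over non-empty blocks, and space by summing $V$, $R$, $SEL$, the shared codebook and decoding tables, and the BSD contributions. Your treatment is actually more explicit than the paper's---in particular, your observation that the concatenated BSD streams encode a sub-multiset of $\{code(g_1),\dots,code(g_n)\}$ (since the global codebook is shared and first gaps of sub-blocks are never stored) is precisely the point the paper relies on when it asserts the $(1+o(1))n\Hg+(3+o(1))n$ bound without re-deriving it.
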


The result stated in Theorem \ref{theorem: FID} improves the space of~\cite{makinen2007rank,sadakane2006squeezing}, reducing both leading and $o(u)$ terms from $gap+\bigO(n\log\log(u/n))$ and $u\log\log u/\log u$ bits to $(1+o(1))n\Hg + (3+o(1))n$ and $\bigO((d+\log\log u)\log u)\subseteq \bigO(\sqrt{u}\log u)$ bits, respectively. 
This improvement comes at the price of a $\bigO(\log(u/n)+\log\log u)$ slowdown in all query times.  Notice that we cannot apply the general technique proposed by M\"akinen and Navarro in~\cite{makinen2007rank} in order to obtain $\bigO(1)$ query times since $code()$ does not (always) satisfy $|code(x)| \in \bigO(\log x)$ (this is one of the properties characterizing \emph{random access self-delimiting codes}~\cite{makinen2007rank}). An interesting line of research would be to envision a broader class of codes (including $code()$) for which we can describe a general technique guaranteeing constant-time queries.

\section{$\Hg$ in practice}

In order to assess also in practice the differences between the above discussed measures, we adopted the approach of~\cite{grossi2004indexing} and simulated several sets, computing for each of them the number of bits per item required by $gap$, $gap + Z_\delta$, $uH_0(S)$, $n\Hg$, $n\Hg+Z_\delta$, and $n\Hg+Z_\delta+CB$, where the last two measures refer to $\Hg$ plus the overhead introduced by $\delta$-encoding (i.e. encoding $g_1,...,g_n$ as described in the previous section) and by the codebook size (CB).

Gaps were generated according to uniform (Table \ref{tab:uniform simulation}) and binomial (Table \ref{tab:binomial simulation}) distributions. Table \ref{tab:uniform simulation} reports the same experiment performed in~\cite{grossi2004indexing} (except from the facts that we use $\delta$ instead of $\gamma$ and we do not consider RLE), updated with our measure $n\Hg$. 
As expected, in this case $n\Hg$ performs slightly worse than $gap$ when taking into account all encoding overheads (columns 3 and 7). This can be explained by the fact that gaps are uniform, thus making $gap + Z_\delta$ and  $n\Hg+Z_\delta$ (without the codebook) almost equivalent. An interesting fact---in accordance with Theorem \ref{th:worst case Hg}---is that, even this being its worst case, $n\Hg$ is always smaller (by about 0.5 bits per item) than $uH_0(S)$.

The advantages of using $n\Hg$ become evident when non-uniform distributions are used. 
Table \ref{tab:binomial simulation} reports the results on binomially-distributed gaps\footnote{We chose a binomial distribution in order to model a scenario in which gap lengths are accumulated around a value $\mu\gg 0$ (in this case, $\mu$ is the mean). Intuitively, in this case $gap$ does not perform well because small numbers are not frequent.}. As expected, in this case our measure considerably improves on $gap$: if the two techniques are compared while taking into account all encoding overheads (columns 3 and 7), our strategy requires about $58\%$ the space of $gap$ encoding.

\begin{table}
\centering
\begin{tabular}{|c|c|c|c|c|c|c|}\hline
$\log(max\_gap)$ & $gap$ & $gap + Z_\delta$ & $uH_0(S)$ & $n\Hg$ & $n\Hg+Z_\delta$ & $n\Hg+Z_\delta+CB$ \\\hline
1	&1.66717	&3.00151	&2.00103	&1.58496	&2.99842	&2.99848\\
2	&2.20164	&3.80142	&2.75854	&2.32191	&3.79349	&3.79364\\
3	&2.77733	&5.00151	&3.61667	&3.16987	&4.98418	&4.98454\\
4	&3.47452	&6.53906	&4.5389	&4.08735	&6.50696	&6.50781\\
5	&4.2771	&7.79638	&5.50097	&5.04417	&7.75575	&7.75773\\
6	&5.15079	&8.90439	&6.48606	&6.02187	&8.8685	&8.87305\\
7	&6.09095	&10.0028	&7.4809	&7.01044	&9.94679	&9.95711\\
8	&7.04186	&11.9893	&8.48908	&8.00377	&11.889	&11.9122\\
9	&8.02066	&13.4915	&9.50168	&8.99923	&13.3703	&13.4216\\
10	&9.01571	&14.7531	&10.5266	&9.99358	&14.5752	&14.6879\\
11	&10.0076	&15.8755	&11.5554	&10.9857	&15.661	&15.9068\\
12	&11.0103	&16.9465	&12.599	&11.9707	&16.6565	&17.1892\\
13	&12.0031	&17.9701	&13.6584	&12.94	&17.5894	&18.7364\\
14	&13.0009	&18.9844	&14.7359	&13.8789	&18.4625	&20.9157\\
15	&13.996	&19.9873	&15.839	&14.7427	&19.2538	&24.2575\\\hline
\end{tabular}

\

\caption{Comparison between $gap$, $gap + Z_\delta$, $uH_0(S)$, $n\Hg$, $n\Hg+Z_\delta$ (i.e. accounting for the $\delta$ overhead per symbol), and $n\Hg+Z_\delta+CB$ (i.e. accounting for the $\delta$ and codebook CB overhead per symbol) on randomly-generated sets. Gaps between the $n$ items ($n$ affects only the variance of the results; we used $n=10^5$) are uniformly distributed in the interval $[1,max\_gap]$. All columns except the first report the number of bits per item required by each method.}\label{tab:uniform simulation}

\begin{tabular}{|c|c|c|c|c|c|c|}\hline
$\log(max\_gap)$ & $gap$ & $gap + Z_\delta$ & $uH_0(S)$ & $n\Hg$ & $n\Hg+Z_\delta$ & $n\Hg+Z_\delta+CB$ \\\hline
1	&1.74989	&3.24967	&2.22939	&1.50052	&2.50156	&2.50162\\
2	&2.25085	&4.12525	&3.16331	&2.03377	&3.00555	&3.0057\\
3	&2.88491	&4.94587	&4.18044	&2.5445	&3.49472	&3.49508\\
4	&3.77183	&7.31887	&5.22493	&3.04741	&4.094	&4.09485\\
5	&4.70015	&8.69979	&6.27376	&3.54494	&4.82176	&4.82326\\
6	&5.64788	&9.64788	&7.31532	&4.04711	&5.61441	&5.61679\\
7	&6.60309	&10.6031	&8.3491	&4.54742	&6.3782	&6.3822\\
8	&7.57464	&12.7239	&9.37466	&5.04947	&7.08812	&7.09424\\
9	&8.55226	&14.5523	&10.3937	&5.54834	&7.75208	&7.76178\\
10	&9.53716	&15.5372	&11.4078	&6.04518	&8.33989	&8.35386\\
11	&10.5229	&16.5229	&12.4178	&6.54489	&8.93035	&8.95219\\
12	&11.516	&17.516	&13.425	&7.04343	&9.56187	&9.59411\\
13	&12.5135	&18.5134	&14.4301	&7.54485	&10.3296	&10.3775\\
14	&13.5082	&19.5082	&15.4338	&8.03851	&11.1441	&11.2149\\
15	&14.5084	&20.5084	&16.4364	&8.53758	&11.9996	&12.1044\\\hline
\end{tabular}

\

\caption{Comparison between $gap$, $gap + Z_\delta$, $uH_0(S)$, $n\Hg$, $n\Hg+Z_\delta$, and $n\Hg+Z_\delta+CB$ on randomly-generated sets. Gaps between the $n$ items ($n=10^5$) are binomially distributed in the (shifted) interval $[1,max\_gap]$ with success probability $p=1/2$. All columns except the first report the number of bits per item required by each method.}\label{tab:binomial simulation}
\end{table}

\section{Conclusions}

In this paper we introduced $\Hg$, a new data-aware measure based on the idea of compressing the gaps between elements of a set $S\subseteq \{0,...,u-1\}$. We provided new theoretical upper-bounds for this measure, and showed that in practice---if the gap stream is compressible---$\Hg$ considerably improves space usage of gap encoding techniques combined with logarithmic codes such as Elias $\delta$-encoding. Finally, we proposed a new zero-order representation of bitvectors based on our new measure and a compressed-gap fully-indexable dictionary supporting fast queries and taking small space in addition to $n\Hg$.

As expected, simulations confirmed that the proposed compressed-gap measure is particularly convenient in situations where the gaps follow a non-uniform distribution or they are dominated mainly by large numbers. The main drawback of $n\Hg$ seems to be the overhead introduced by the zero-order compressor, which in our solution is of $\Theta(\sqrt u\log u)$ bits in the worst case. However, in some practical applications this overhead---being proportional to the number $d$ of \emph{distinct} gap lengths---is expected to be negligible with respect to the overall structure size. One example of such an application is run-length compression of the BWT of highly repetitive text collections (e.g. genome variants), where run lengths are expected to scale linearly with the \emph{number of documents} in the collection~\cite{makinen2010storage,siren2012compressed}. 


We plan to implement our FID and test it against state-of-the-art practical gap-encoded bitvector representations (e.g. \texttt{sd\_vector} of SDSL\cite{gbmp2014sea}). Notice that in practice Huffman-compression of the gaps should be preferred to universal delta-encoding, as the additional overhead is much smaller (i.e. we can remove the $o(n\Hg)$ term). Our FID could find a first application in repetition-aware self-indexing, e.g. by using it as building block of a more space-efficient run-length compressed suffix array (RLCSA\cite{siren2012compressed}).

\bibliographystyle{splncs03}
\bibliography{cGAP}

\end{document}